\newtheorem{theorem}{Theorem}
\newtheorem{definition}{Definition}
\begin{document}
	
	
	\title{Fermionic one-body entanglement as a thermodynamic resource}
	
\author{Krzysztof Ptaszy\'{n}ski}
\email{krzysztof.ptaszynski@uni.lu}
\affiliation{Complex Systems and Statistical Mechanics, Department of Physics and Materials Science, University of Luxembourg, L-1511 Luxembourg, Luxembourg}
\affiliation{Institute of Molecular Physics, Polish Academy of Sciences, Mariana Smoluchowskiego 17, 60-179 Pozna\'{n}, Poland}

\author{Massimiliano Esposito}
\email{massimiliano.esposito@uni.lu}
\affiliation{Complex Systems and Statistical Mechanics, Department of Physics and Materials Science, University of Luxembourg, L-1511 Luxembourg, Luxembourg}
	
	\date{\today}
	
	\begin{abstract}
		There is ongoing controversy about whether a coherent superposition of the occupied states of two fermionic modes should be regarded entangled or not, that is, whether its intrinsic quantum correlations are operationally accessible and useful as a resource. This has been questioned on the basis that such an entanglement cannot be accessed by local operations on individual modes due to the parity superselection rule which constrains the set of physical observables. In other words, one cannot observe violations of Bell's inequality. Here we show, however, that entanglement of a two-mode fermionic state can be used as a genuine quantum resource in open-system thermodynamic processes, enabling one to perform tasks forbidden for separable states. We thus demonstrate that quantum thermodynamics can shed light on the nature of fermionic entanglement and the operational meaning of the different notions used to define it.
	\end{abstract}
	
	\maketitle
	
	\textit{Introduction}.---Entanglement is one of the most fundamental concepts of quantum mechanics, implying the impossibility of local realistic description of physical systems, as well as one of key resources in quantum computing, metrology, communication, and thermodynamics~\cite{horodecki2009}. In the field of quantum information, it is commonly defined as follows~\cite{nielsen2010}:
	\begin{definition}[General definition of entanglement] \label{statebased}
		The state of a bipartite system $AB$ is deemed separable when its density matrix can be decomposed as
		\begin{align}
			\rho_{AB} =\sum_{k} \lambda_k \rho_{A}^k \otimes \rho_{B}^k,
		\end{align}
		where $\lambda_k$ are positive-valued probabilities summing up to 1 and $\rho_{A}^k$, $\rho_{B}^k$ are positive semidefinite matrices with trace 1. Otherwise, the state is deemed to be entangled.
	\end{definition}
	Although this general notion is applicable to systems of distinguishable entities (such as qubits), the proper definition of entanglement of indistinguishable particles is still a disputed issue (see Ref.~\cite{benatti2020} for a review). For example, it has been asserted that the definition of entanglement needs to be modified for fermionic systems which obey the parity superselection rule prohibiting coherent superpositions of states with different particle parity (e.g., empty and occupied states of a single energy level)~\cite{wick1952, szalay2021}. This rule provides constraints on the form of physically allowed fermionic states, as well as observables and quantum operations~\cite{vidal2021}. As thoroughly discussed by Ba\~{n}uls \textit{et al.}~\cite{banuls2007}, applying the parity superselection rule in different ways, one can provide physically justified notions of entanglement which are either weaker~\cite{dariano2014,dariano2014b} or stronger~\cite{wiseman2003, ding2021, ding2022, ernst2022, wiseman2004} than Definition~\ref{statebased}. In particular, these definitions provide different answers to the question whether the pure state of a single fermion delocalized between two modes $A$ and $B$ (e.g., quantum dots),
	\begin{align}
		|\Psi_{AB} \rangle=\frac{1}{\sqrt{2}} \left(\left|1_A \right \rangle \left|0_B \right \rangle+ \left|0_A \right \rangle \left|1_B \right \rangle \right),
	\end{align}
	is entangled or not (here $|0_X \rangle$ and $|1_X \rangle$ denote the empty and occupied states of the mode $X$, respectively; see footnote~\cite{commentnot} for further clarifications about the notation used). According to Definition~\ref{statebased} such a state is maximally entangled [indeed, it looks like the Bell state ${(|\mathord{\uparrow} \mathord{\downarrow} \rangle+|\mathord{\downarrow}\mathord{\uparrow} \rangle)/\sqrt{2}}$]. It is separable, however, according to a more restrictive notion used in Refs.~\cite{wiseman2003, ding2021, ding2022, ernst2022, wiseman2004}:
	\begin{definition}[Observable-based definition of fermionic entanglement] \label{obsbased}
		Let us first define the locally projected state
		\begin{align}
			\pi_{AB}=\sum_{\alpha,\gamma=e,o} \left( \mathbb{P}_\alpha^A \otimes \mathbb{P}_\gamma^B \right) \rho_{AB} \left( \mathbb{P}_\alpha^A \otimes \mathbb{P}_\gamma^B \right),
		\end{align}
		where $\mathbb{P}_{e/o}^X$ is the local projection of the subsystem $X \in \{A,B\}$ on the even/odd particle parity sector. Then, the state $\rho_{AB}$ is considered entangled when $\pi_{AB}$ is entangled with respect to Definition~\ref{statebased}.
	\end{definition}
	The physical meaning of this definition becomes clear by noting that the state $\pi_{AB}$ reproduces all the correlations of local observables $O_A$ and $O_B$ acting on $A$ and $B$,
	\begin{align}
		\forall O_A, O_B: \quad \text{Tr} \left[O_A O_B \rho_{AB} \right] = \text{Tr} \left[O_A O_B \pi_{AB} \right],
	\end{align}
	as the observables obey the parity superselection rule. Therefore, the state $\rho_{AB}$ is deemed separable when it cannot be distinguished from a classical mixture of tensor product states via correlations of local measurements (e.g., through a violation of Bell's inequality). The reason why the state $|\Psi_{AB} \rangle$ is not entangled according to Definition~\ref{obsbased} is that the parity superselection rule permits only a single type of measurement on a single fermionic mode, namely, a projection on its empty or occupied state. As a result, one cannot distinguish the state $|\Psi_{AB} \rangle$ from the mixed state
	\begin{align}
		\pi_{AB} = \frac{1}{2} \left( \left|1_A \right \rangle \left|0_B \right \rangle \left \langle 1_A \right| \left \langle 0_B \right| + \left|0_A \right \rangle \left|1_B \right \rangle \left \langle 0_A \right| \left \langle 1_B \right| \right),
	\end{align}
	and thus this state is deemed separable. In fact, any pure or mixed state $\rho_{AB}$ of a two-mode system is separable according to Definition~\ref{obsbased}. This is the reason why the operational meaning of the entanglement of the one-body state with respect to Definition~\ref{statebased} has been put into question. It has been described as a physically inaccessible ``fluffy bunny'' entanglement~\cite{wiseman2004, ernst2022} and its applicability as a quantum resource has been contested~\cite{ding2021}.
	
At the same time, it has been shown that the violation of Bell's inequality~\cite{dasenbrook2016} or quantum teleportation~\cite{olofsson2020, debarba2020} can be achieved using an initial product state of a pair of one-body states: $|\Psi_{\tilde{A} \tilde{B}} \rangle=|\Psi_{A_1B_1} \rangle \otimes |\Psi_{A_2B_2} \rangle$. This has been interpreted as a demonstration of a genuine one-body entanglement~\cite{friis2016}. However, one can raise a possible objection to such an interpretation, which may read as follows. The aforementioned protocols require the joint and simultaneous processing of the states $|\Psi_{A_1B_1} \rangle$ and $|\Psi_{A_2B_2} \rangle$, and the measurement of their collective observables [defined as superpositions of different states of modes $A_i$, such as ${(|1_{A_1} \rangle |0_{A_2} \rangle+|0_{A_1} \rangle |1_{A_2} \rangle)/\sqrt{2}}$]. This differs from the standard experimental procedures used to demonstrate the presence of entanglement, where each entangled state is processed in a separate experimental run (e.g., individual photon pairs are created and detected at separate moments of time). Thus, one may argue that the aforementioned studies can be interpreted on the ground of Definition~\ref{obsbased} using the concept of entanglement \textit{super-additivity}~\cite{wiseman2003}: Due to the possibility of defining collective observables, the tensor product $|\Psi_{\tilde{A} \tilde{B}} \rangle=|\Psi_{A_1B_1} \rangle \otimes |\Psi_{A_2B_2} \rangle$ can be entangled even when the individual states $|\Psi_{A_iB_i} \rangle$ are separable. According to this line of reasoning, the entanglement of a single one-body state $|\Psi_{AB} \rangle$ is regarded physically meaningful only if it is operationally accessible when such a state (or, more generally, every member of an ensemble of such states) is processed separately (as in typical Bell's inequality violation experiments).
	
In this Letter, we show that considering the thermodynamics of the open system provides an operational way to physically reveal the entanglement of a single one-body fermionic state $|\Psi_{AB} \rangle$ with respect to Definition~\ref{statebased}. Such a state can be used to realize thermodynamic tasks that would be impossible with a separable state. Specifically, we demonstrate a process in which the one-body entanglement is a source of positive coherent information (negative conditional entropy) that can be used as a thermodynamic resource to perform the cooling~\cite{rio2011}. This shows that the entanglement of the one-body state has a clear physical manifestation and operational value, even though it is inaccessible through local operations.
	
	\textit{Setup}.---We consider a single-mode system $S$ coupled with a single-mode memory $M$ described by the Hamiltonian
	\begin{align}
		H_{MS}=\epsilon_M c_M^\dagger c_M + \epsilon_S(t) c_S^\dagger c_S + \hbar \Omega(t) \left(c_M^\dagger c_S+c_S^\dagger c_M \right),
	\end{align}
	where $\epsilon_X$ is the energy of the mode $X$ and $\Omega(t)$ is the tunnel coupling. The system-memory ensemble is connected to the reservoir $R$ as
	\begin{align}
		H_{MSR}=H_{MS}(t)+H_R+H_I(t),
	\end{align}
	where $H_R$ is the reservoir Hamiltonian and $H_I(t)$ describes a switchable system-reservoir coupling. The global system $MSR$ undergoes a unitary evolution starting from the initial state $\rho_{MSR}(0)=\rho_{MS}(0) \otimes \rho_R^\text{eq}$, where $\rho_R^\text{eq}=\exp(-\beta H_R)/\text{Tr} \exp(-\beta H_R)$ is the equilibrium Gibbs state of the reservoir. Here $\beta=1/(k_B T)$ is the inverse temperature of the reservoir, and we take the chemical potential $\mu=0$ for simplicity. The heat flow is defined as an energy change of the reservoir (with a minus sign):
	\begin{align}
		-Q = \text{Tr} \left\{ H_R \left[ \rho_R(t) - \rho_R^\text{eq} \right] \right\}.
	\end{align}
	Here, we follow a standard sign convention where the heat flow to the system is denoted as positive. 
	
	Using such definitions, the heat flow has been shown to obey the second law of thermodynamics in the form~\cite{esposito2010}
	\begin{align}
		\sigma=\Delta S_{MS}-\beta Q \geq 0,
	\end{align}
	where $\sigma$ is the entropy production, $S_X=-\text{Tr} (\rho_X \ln \rho_X)$ is the von Neumann entropy, and $\Delta S_{MS}=S_{MS}(t)-S_{MS}(0)$; this expression is valid for an arbitrary initial state $\rho_{MS}$ (since the von Neumann entropy is well defined outside of equilibrium).
	
	\textit{Memory-assisted purification}.---We will now consider a thermodynamic process proposed by del Rio \textit{et al.}~\cite{rio2011}, which is defined as follows:
	\begin{definition}[Memory-assisted purification] \label{memerasure} Memory-assisted purification is a process converting the initially-correlated state $\rho_{MS}$ of the memory $M$ and the purified system $S$ as
		\begin{align}
			\rho_{MS} \rightarrow \rho_M \otimes |\Psi_{S'} \rangle \langle \Psi_{S'}|,
		\end{align}
		where $\rho_M$ is the unchanged state of the memory and $|\Psi_{S'} \rangle$ is the final pure state of the system.
	\end{definition}
	
	Heat dissipation during this process obeys the following theorem~\cite{rio2011}:
	\begin{theorem} \label{distheorem}
		For any initial state $\rho_{MS}$ which is separable with respect to Definition~\ref{statebased} the heat dissipation during the memory-assisted purification is nonnegative: $-Q \geq 0$.
	\end{theorem}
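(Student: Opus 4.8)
The plan is to reduce the thermodynamic statement to a purely information-theoretic inequality about the initial state, isolating separability as the single ingredient that forces $-Q\geq 0$. First I would compute the entropy change $\Delta S_{MS}$ across the purification. Because the final state $\rho_M\otimes|\Psi_{S'}\rangle\langle\Psi_{S'}|$ is a product whose system factor is pure, additivity of the von Neumann entropy over tensor factors gives $S_{MS}(t)=S_M$, with $S_M$ the unchanged memory entropy. Hence
\begin{align}
\Delta S_{MS}=S_M-S_{MS}(0)=-\left[S_{MS}(0)-S_M\right]\equiv -S_{S|M},
\end{align}
where $S_{S|M}=S_{MS}(0)-S_M$ is the conditional entropy of the system given the memory in the initial state.

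Next I would feed this into the second law $\sigma=\Delta S_{MS}-\beta Q\geq 0$, which holds for an arbitrary initial $\rho_{MS}$ as stated above. Rearranging,
\begin{align}
-Q\geq \frac{1}{\beta}\,S_{S|M}=k_B T\,S_{S|M},
\end{align}
so the sign of the dissipated heat is governed entirely by the sign of the initial conditional entropy. The theorem therefore follows the moment one shows $S_{S|M}\geq 0$ for every state separable with respect to Definition~\ref{statebased}.

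Establishing this last bound is the step I expect to be the main obstacle, since it is precisely where separability (as opposed to the observable-based notion) does the work. Given a decomposition $\rho_{MS}=\sum_k\lambda_k\,\rho_M^k\otimes\rho_S^k$, I would introduce a classical register $K$ recording the index $k$, forming $\rho_{MSK}=\sum_k\lambda_k\,\rho_M^k\otimes\rho_S^k\otimes|k\rangle\langle k|$ with $\text{Tr}_K\rho_{MSK}=\rho_{MS}$. Strong subadditivity of the von Neumann entropy, in its conditional form $S_{S|MK}\leq S_{S|M}$, then gives
\begin{align}
S_{S|M}\geq S_{S|MK}=\sum_k\lambda_k\,S(\rho_S^k)\geq 0,
\end{align}
where the middle equality uses that $K$ is classical, so the conditional entropy collapses to the average of the per-branch entropies $S(\rho_S^k)$, each nonnegative; equivalently, this is just the concavity of the conditional entropy applied to the separable ensemble. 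Combining the two displayed bounds yields $-Q\geq k_B T\,S_{S|M}\geq 0$, as claimed. It is worth flagging that the argument pinpoints why entanglement matters thermodynamically: for an entangled $\rho_{MS}$—e.g.\ the one-body state $|\Psi_{AB}\rangle$—the conditional entropy $S_{S|M}$ can turn negative, so the same bound then permits $-Q<0$, i.e.\ net heat extraction during purification.
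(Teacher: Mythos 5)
Your proof is correct, and its thermodynamic skeleton is exactly the paper's: compute $\Delta S_{MS}=S_M-S_{MS}(0)$ from the definition of memory-assisted purification, insert it into the second law $\Delta S_{MS}-\beta Q\geq 0$, and conclude that the sign of $-Q$ is controlled by the sign of the initial conditional entropy (equivalently, minus the coherent information $I_{S\rangle M}=-S_{S|M}$). Where you genuinely add something is the last step: the paper simply cites Refs.~\cite{cerf1997,vollbrecht2002} for the fact that separable states have nonpositive coherent information, whereas you prove it from scratch by dilating the separable decomposition $\rho_{MS}=\sum_k\lambda_k\,\rho_M^k\otimes\rho_S^k$ with a classical flag register $K$ and invoking strong subadditivity in the form $S_{S|MK}\leq S_{S|M}$, together with the computation $S_{S|MK}=\sum_k\lambda_k S(\rho_S^k)\geq 0$ (equivalently, concavity of conditional entropy). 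Both the dilation and the identity for $S_{S|MK}$ check out, so your argument makes the theorem self-contained where the paper's is citation-dependent; the paper's version is shorter and defers the information-theoretic content to the literature, while yours isolates exactly which property of quantum entropy (strong subadditivity) underlies the thermodynamic bound. One cosmetic point: the paper loosely says the coherent information is ``negative'' for separable states, while your statement $S_{S|M}\geq 0$ (i.e.\ $I_{S\rangle M}\leq 0$) is the correct nonstrict form, which is all the theorem needs.
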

	\begin{proof}
		The change in von Neumann entropy of $\rho_{MS}$ during memory-assisted purification $\Delta S_{MS}=S_M-S_{SM}$ is equal to the coherent information $I_{S \rangle M}=S_{M}-S_{SM}$, which is negative for any separable state~\cite{cerf1997, vollbrecht2002} (with respect to Definition~\ref{statebased}). According to the second law of thermodynamics $\Delta S_{MS}-\beta Q \geq 0$. Thus, $-Q \geq 0$.
	\end{proof}
	Therefore, the cooling of the reservoir ($-Q<0$) during memory-assisted purification can be realized only when the initial state $\rho_{MS}$ is entangled. This can be also expressed in another way: Entanglement allows one to extract work ($-W =Q>0$) during the process since it increases the initial free energy $F_{MS}=E_{MS}-\beta^{-1} S_{MS}$ (which obeys the inequality $-W \leq -\Delta F_{MS}$) by lowering the entropy.
	
	We will now demonstrate that the reservoir cooling can be achieved using physically-allowed fermionic operations and the initial one-body state
	\begin{align}
		|\Psi_{MS} \rangle^{(1)}=\frac{1}{\sqrt{2}} \left(\left|1_M \right \rangle \left|0_S \right \rangle+i \left|0_M \right \rangle \left|1_S \right \rangle \right).
	\end{align}
	The initial reduced states of the system and the memory read accordingly:
	\begin{align}
		\rho_S^{(1)} &=\frac{1}{2} \left( \left|1_S \right \rangle \left \langle 1_S \right| + \left|0_S \right \rangle \left \langle 0_S \right| \right), \\ \label{memred}
		\rho_M^{(1)} &=\frac{1}{2} \left( \left|1_M \right \rangle \left \langle 1_M \right| + \left|0_M \right \rangle \left \langle 0_M \right| \right).
	\end{align}
	The von Neumann entropy of the total pure state $|\Psi_{MS} \rangle^{(1)}$ is equal to 0 ($S_{MS}=0$). The marginal entropies of $S$ and $M$ read $S_S=S_M=h(1/2)=\ln 2$ where $h(x)=-x \ln x-(1-x)\ln(1-x)$ is the binary entropy. Thus, the coherent information $I_{S \rangle M}=S_M-S_{SM}=\ln 2$ is positive.
	
	We shall now consider the following sequence of operations. The initial mode energies are set to 0: $\epsilon_M=\epsilon_S=0$. In the first step, the initial state is converted via a unitary transformation as
	\begin{align} \label{unitconv1}
		|\Psi_{MS} \rangle^{(1)} \rightarrow |\Psi_{MS} \rangle^{(2)} = \left|0_M \right \rangle \left|1_S \right \rangle,
	\end{align}
	which is performed by switching on the tunnel coupling $\Omega(t)=\Omega$ for a time $\pi \Omega^{-1}/4$. This can be realized experimentally, e.g., in quantum dot systems~\cite{hayashi2003, gorman2005}. We note here that the coherent superposition $|\Psi_{MS} \rangle^{(1)}$ can be prepared by the reverse transformation from the initial state $\left|0_M \right \rangle \left|1_S \right \rangle$ which, in turn, can be created by nonequilibrium driving~\cite{hayashi2003} or by Landauer erasure~\cite{scandi2022}.
	
	In the next step, we perform the following thermodynamic quasistatic process: The system energy is lowered to $\epsilon_S \rightarrow -\infty$. Then, the system is coupled to the reservoir and its energy is slowly increased to $\epsilon_S=0$ (see Ref.~\cite{koski2014} for a similar experimental procedure). As a result, the population of the system decreases from 1 to 1/2 and the state becomes mixed:
	\begin{align} \label{step2}
		&|\Psi_{MS} \rangle^{(2)} \rightarrow \rho_{MS}^{(3)} \\ \nonumber &=\frac{1}{2} \left|0_M \right \rangle \left \langle 0_M \right| \otimes  \left( \left|1_S \right \rangle \left \langle 1_S \right| + \left|0_S \right \rangle \left \langle 0_S \right| \right).
	\end{align}
	During a quasi-static (i.e., reversible) process, the entropy production $\sigma=\Delta S_S-\beta Q$ is equal to 0. Therefore, the heat dissipation is negative:
	\begin{align} \nonumber
		-Q =& -k_B T \Delta S_S=-k_B T [h(1/2)-h(1)]= \\ &-k_B T \ln 2=-k_B T I_{S \rangle M}<0.
	\end{align}
	In this way, the coherent information $I_{S \rangle M}$ is fully used as a resource to perform the cooling. Finally, we apply a unitary process swapping the states of the system and the memory,
	\begin{align} \label{step3}
		&\rho_{MS}^{(3)} \rightarrow \rho_{MS}^{(4)}=\\ \nonumber
		&\frac{1}{2} \left( \left|1_M \right \rangle \left \langle 1_M \right| + \left|0_M \right \rangle \left \langle 0_M \right| \right) \otimes \left|0_S \right \rangle \left \langle 0_S \right|,
	\end{align}
	which is realized by turning on the tunnel coupling $\Omega(t)=\Omega$ for a time $\pi \Omega^{-1}/2$. As a result, the system becomes purified while the memory state is restored to the initial one [cf.~Eq.~\eqref{memred}]. At this point, the memory-assisted purification is complete.
	
	\textit{Validity of quasistatic approximation}.---While our analytic results assume an ideal quasi-static process, we now show that cooling can also be achieved in finite time using finite level energies. Specifically, we consider a process where the system energy is linearly swept from the value $\epsilon_1<0$ to $\epsilon_2>0$ during interval $[0,\tau]$, and then stays at the value $\epsilon_2$. The dynamics of the system population is described by the master equation~\cite{fichetti1998}
	\begin{align}
		\dot{n}_S=-\Gamma \left[n_S-f(\epsilon_S) \right],
	\end{align}
	where $\Gamma$ is the relaxation rate and $f(\epsilon_S)={[1+\exp(\beta \epsilon_S)]^{-1}}$ is the Fermi distribution. For weak system-reservoir coupling $\Gamma \ll k_B T$ the heat dissipation can be calculated as~\cite{benenti2017}
	\begin{align}
		-Q=-\int_{0}^{t_f} \epsilon_S(t) \dot{n}_S(t) dt,
	\end{align}
	where $t_f$ is the time at which $n_S(t)$ reaches 1/2 and the coupling to the bath is switched-off. We will later show that this formula well reproduces the heat calculated using exact simulations of the system-reservoir dynamics which takes into account finite $\Gamma$. The results are presented in Fig.~\ref{fig:heatmark}. As shown, heat dissipation becomes negative for $\Gamma \tau \gtrapprox 4$ and approaches a theoretical limit $-k_B T \ln 2$ for long sweep times.
	%
\begin{figure}
	\centering
	\includegraphics[width=0.93\linewidth]{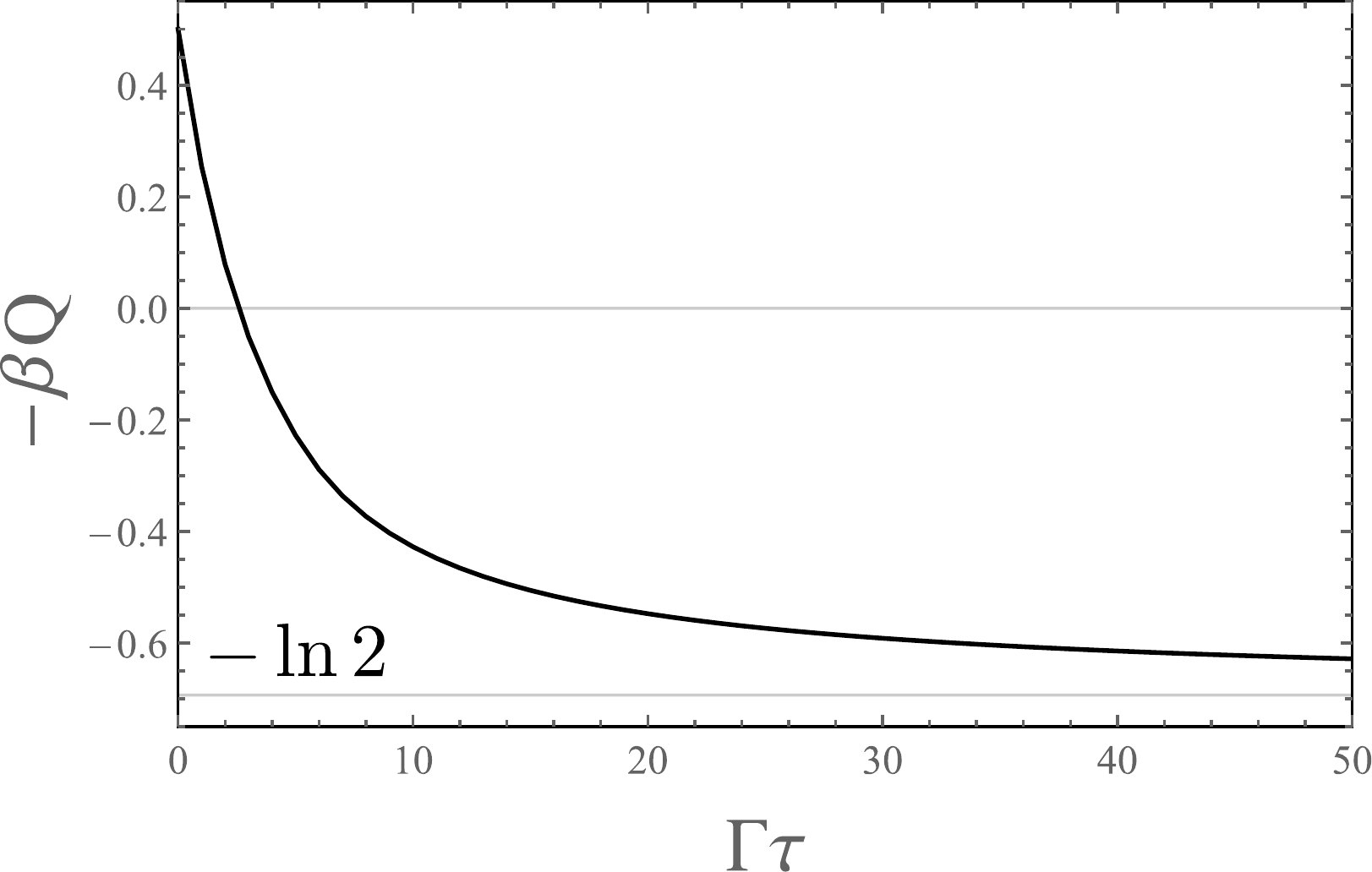}		
	\caption{Heat dissipation as a function of sweep time $\tau$ for $\epsilon_1=-5 k_B T$ and $\epsilon_2=k_B T$.}
	\label{fig:heatmark}
\end{figure}
%
	
	To confirm the validity of the master equation approach, we performed a simulation of the exact system-reservoir dynamics generated by the Hamiltonian
	\begin{align}
		H_{SR}=\epsilon_S c^\dagger_S c_S + \sum_{k=1}^K \epsilon_k c_k^\dagger c_k + \mathcal{T} \sum_{k=1}^K \left(c_k^\dagger c_S+c_S^\dagger c_k \right).
	\end{align}
	The energy levels of the reservoir $\epsilon_k$ are uniformly distributed over the interval $[-7k_B T,3k_B T]$. $\mathcal{T}$ is related to $\Gamma$ as $\Gamma=2\pi \mathcal{T}^2 \xi$, where $\xi=K/(10k_B T)$ is the density of states in the reservoir~\cite{schaller2014}. Since the Hamiltonian is quadratic, the system-reservoir state can be fully characterized by the correlation matrix $\mathcal{C}_{ij}=\langle c_i^\dagger c_j \rangle$~\cite{peschel2003}. We simulated its evolution using the Euler method~\cite{eisler2012}:
	\begin{align}
		\mathcal{C}(t+\Delta t)=e^{i \Delta t \mathcal{H}(t)} \mathcal{C}(t) e^{-i \Delta t \mathcal{H}(t)}.
	\end{align}
	%
	\begin{figure}
		\centering
		\includegraphics[width=0.93\linewidth]{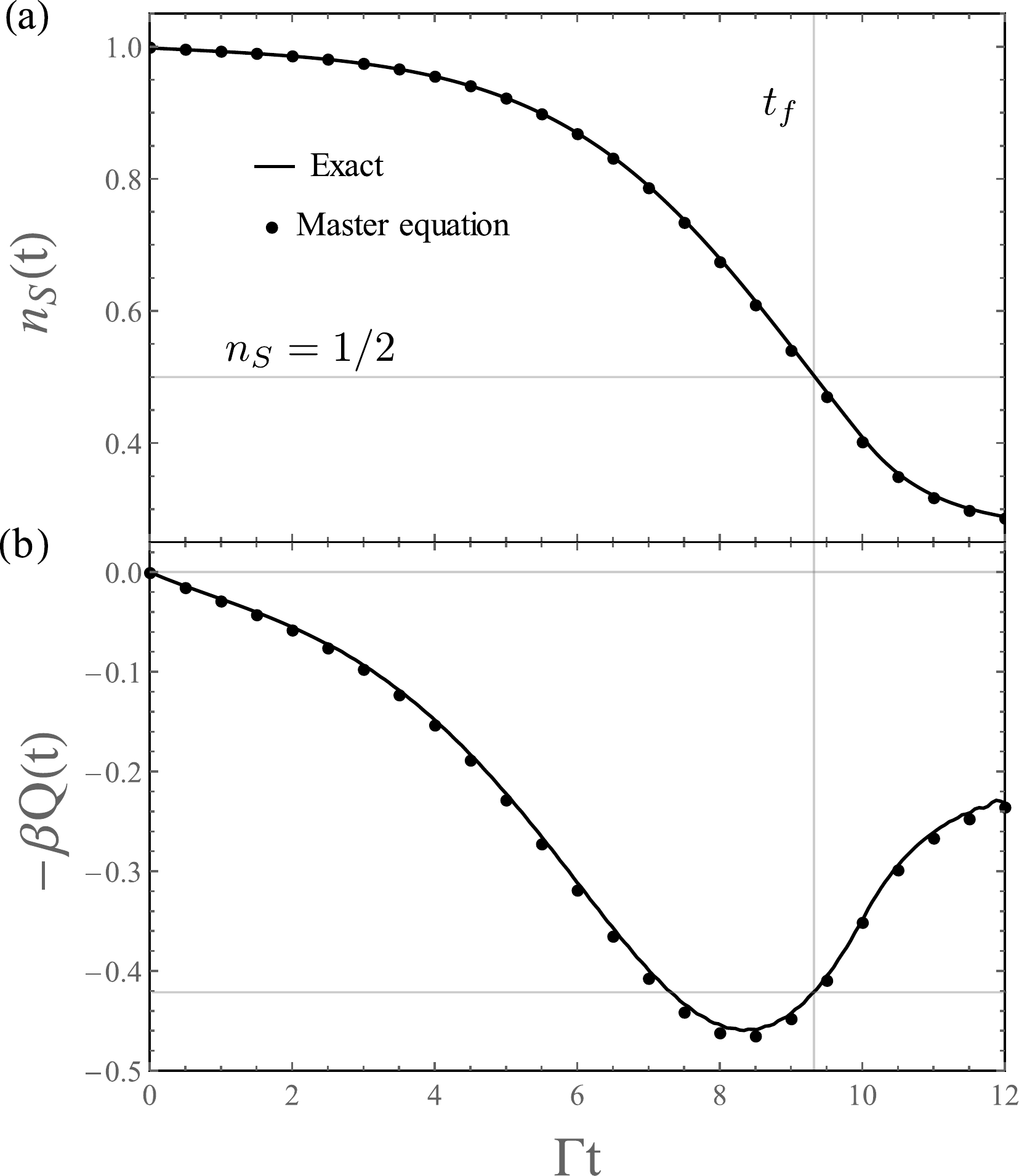}		
		\caption{(a) State population and (b) heat dissipation as a function of time for the exact dynamics (black solid line) and master equation (black dots). Parameters: $\Gamma=0.02 k_B T$, $\Gamma \tau=10$, $\Gamma \Delta t=0.06$, $K=200$.}
		\label{fig:comp}
	\end{figure}
	%
	As a matter of fact, this is not just an approximation, but corresponds to a physical scenario when $\epsilon_S$ is swept in a stepwise manner. The matrix $\mathcal{H}$ is defined as $\mathcal{H}_{ii}=\epsilon_i$, $\mathcal{H}_{Sk}=\mathcal{H}_{kS}=\mathcal{T}$, and $\mathcal{H}_{ij}=0$ otherwise. The initial state reads $\mathcal{C}(0)=\text{diag}[1,f(\epsilon_1),\ldots,f(\epsilon_K)]$. The reservoir energy can be calculated as $\text{Tr} (H_R \rho_R)=\sum_{k=1}^K \epsilon_k c_{kk}$. The initial and final energies of the system were chosen as $\epsilon_1=-5 k_B T$ and $\epsilon_2=k_BT$. An example of the time evolution of the state population and heat dissipation is presented in Fig.~\ref{fig:comp}. One notes that the exact results are in perfect agreement with the predictions of the master equation and also describe cooling. Indeed, heat dissipation is negative $(-Q \approx -0.42 k_B T <0)$ for $\Gamma t =\Gamma t_f \approx 9.3$ when $n_S(t)$ reaches the value 1/2. 
	
	\textit{Entanglement detection}.---Finally, let us show how the presence of one-body entanglement can be experimentally detected (also for processes which do not fulfill Definition~\ref{memerasure}). This can be realized as follows. One prepares an ensemble of two-mode states $\rho_{MS}$. For some members of the ensemble, one measures the initial occupancies $n_S^0$ and $n_M^0$. For the remaining members, one performs the sequence of quantum operations and determines the final occupancies $n_S^1$, $m_M^1$, as well as heat dissipation $-Q$. The minimum initial entropy for separable states is $\text{max}[h(n_S^0),h(n_M^0)]$~\cite{cerf1997, vollbrecht2002}. Thus, the presence of an initial entanglement is implied whenever the inequality
	\begin{align} \label{nonideq}
		h(n_S^1)+h(n_M^1)-\text{max}\left[h(n_S^0),h(n_M^0) \right]-\beta Q \geq 0
	\end{align}
	is violated (the proof is analogous to Theorem~\ref{distheorem}). In fact, this inequality is already broken for the unitary process given by Eq.~\eqref{unitconv1}, where the system is thermally isolated from the environment ($-Q=0$). Since coherent electron dynamics in isolated quantum dots has been experimentally realized~\cite{gorman2005}, the detection of the one-body fermionic entanglement appears to be feasible using state-of-the-art techniques. We highlight that in the proposed protocol, each state $\rho_{MS}$ can be processed separately. For example, one may use identically prepared states of the same double-quantum-dot system, which are processed sequentially at separate moments of time. Therefore, it can be used to demonstrate a genuine one-body entanglement with respect to Definition~\ref{statebased} (cf.~the discussion in the Introduction).
	
	\textit{Final remarks}.---We demonstrate that entanglement with respect to Definition~\ref{statebased} retains its operational meaning also for two-mode fermionic states obeying the parity superselection rule (which has been questioned in Refs.~\cite{wiseman2003, wiseman2004, ding2021, ding2022, ernst2022}): It describes the potential of a quantum state to carry the positive coherent information (negative conditional entropy), which can be extracted via global operations (e.g., open-system thermodynamic processes) to perform classically forbidden tasks. Since positive coherent information is a resource in several quantum information protocols~\cite{bruss2004, prabhu2013, yang2019, vempati2021}, the applicability of fermionic states to these goals needs to be properly established. At the same time, such an entanglement is not enough to observe non-local correlations between distant parties, which requires the fulfillment of Definition~\ref{obsbased}; however, this obstacle can be easily circumvented by jointly processing a pair of fermionic states~\cite{dasenbrook2016, olofsson2020, debarba2020}. Another (and equally justified) interpretation of our result is that the entanglement of a one-body state is a genuine quantum resource, which can be extracted in two distinct ways: either by applying only local operations, but allowing joint processing of several copies of a quantum state (as in Refs.~\cite{dasenbrook2016, olofsson2020, debarba2020}), or by separately processing different copies, but allowing global operations (our proposal). Furthermore, we note that fermionic entanglement defined using notions even weaker than Definition~\ref{statebased}~\cite{dariano2014, dariano2014b} has been demonstrated to be applicable for certain tasks, such as data hiding~\cite{verstraete2003, schuch2004} (due to the impossibility of local state preparation). We conclude, therefore, that different complementary definitions of fermionic entanglement -- and, more generally, entanglement between identical particles -- may be justified in different physical contexts (see Ref.~\cite{morris2020} for a similar message concerning the particle entanglement in bosonic systems). Our Letter shows that quantum thermodynamics may be a useful theoretical framework to shed light on this issue.
	
	\begin{acknowledgments}
	K. P. is supported by the Scholarships of Minister of Science and Higher Education. This research was also supported by the FQXi foundation project FQXi-IAF19-05.
	\end{acknowledgments}

\end{document}